\newtheoremstyle{note}
  {\topsep/2}               
  {\topsep/2}               
  {}                      
  {\parindent}            
  {\itshape}              
  {.}                     
  {5pt plus 1pt minus 1pt}
  {}
\theoremstyle{note}
\newtheorem{theorem}{Theorem}
\newtheorem{lemma}{Lemma}
\newtheorem{conjecture}{Conjecture}
\theoremstyle{definition}
\theoremstyle{remark}
\newtheorem{remark}{Remark}
\newcommand{\vecrm}[1]{\mathbf{#1}}
\newcommand{\mrm}[1]{\mathrm{#1}}
\providecommand{\tr}{\operatorname{tr}}
\newcommand{\rep}{\mathrel{\widehat{=}}}
\providecommand{\rmi}{\mathrm{i}}
\providecommand{\rme}{\mathrm{e}}
\providecommand{\rmd}{\mathrm{d}}
\newcommand{\rmT}{\mathrm{T}}
\newcommand{\bbF}{\mathbb{F}}
\newcommand{\be}{\begin{equation}}
\newcommand{\ee}{\end{equation}}
\newcommand{\ba}{\begin{align}}
\newcommand{\ea}{\end{align}}
\def\<{\langle}  
\def\>{\rangle}  
\newcommand{\Sp}[2]{\mrm{Sp}(#1,#2)}
\newcommand{\ASp}[2]{\mrm{ASp}(#1,#2)}
\newcommand{\SL}[2]{\mrm{SL}(#1,#2)}
\newcommand{\PSL}[2]{\mrm{PSL}(#1,#2)}
\newcommand{\ASL}[2]{\mrm{ASL}(#1,#2)}
\newcommand{\hw}{D}
\newcommand{\phw}{\overline{D}}
\newcommand{\Cli}{\mathrm{C}}
\newcommand{\pc}{\overline{\mathrm{C}}}
\newcommand{\Clir}{\mathrm{C}_{\mrm{r}}}
\newcommand{\pcr}{\overline{\mathrm{C}}_{\mrm{r}}}
\def\eqref#1{\textup{(\ref{#1})}}  
\newcommand{\eref}[1]{Eq.~\textup{(\ref{#1})}}
\newcommand{\thref}[1]{Theorem~\ref{#1}}
\newcommand{\Thref}[1]{Theorem~\ref{#1}}
\newcommand{\lref}[1]{Lemma~\ref{#1}}
\newcommand{\cref}[1]{Conjecture~\ref{#1}}
\newcommand{\Cref}[1]{Conjecture~\ref{#1}}
\newcommand{\rcite}[1]{Ref.~\cite{#1}}
\newcommand{\rscite}[1]{Refs.~\cite{#1}}
\begin{document}
	\title{Mutually unbiased bases as minimal Clifford covariant 2-designs}
	\author{Huangjun Zhu}
	\email{hzhu@pitp.ca}
	\affiliation{Perimeter Institute for Theoretical Physics, Waterloo, Ontario N2L 2Y5, Canada}
	
	\pacs{03.67.-a, 02.10.De, 03.65.-w}



\begin{abstract}
Mutually unbiased bases (MUB) are interesting for various reasons. The most attractive example of (a complete set of) MUB is
the one constructed by Ivanovi\'c as well as Wootters and Fields, which is referred to as the canonical MUB. Nevertheless, little is known about anything that is unique to this MUB. We show that the canonical MUB in any prime power dimension is uniquely determined by  an extremal orbit of the (restricted) Clifford group except in dimension 3, in which case the orbit defines a special symmetric informationally complete measurement (SIC), known as the Hesse SIC. Here the extremal orbit is the one with the smallest number of pure states. Quite surprisingly, this characterization does not rely on any concept that is related to bases or unbiasedness. As a corollary, the canonical MUB is the unique minimal 2-design covariant with respect to the Clifford group except in dimension 3. In addition, these MUB provide an infinite family of highly symmetric frames and positive-operator-valued measures (POVMs), which are of independent interest.

\end{abstract}
\date{\today}
\maketitle

\emph{Mutually unbiased bases} (MUB) are useful to a number of research areas, such as quantum kinematics, quantum state tomography, and quantum cryptography. They are also interesting because of their connections with discrete Wigner functions, symmetric informationally complete measurements (SICs for short), and generalized Bell states; see \rcite{DurtEBZ10} for a review.
In a $d$-dimensional Hilbert space,
two bases  are mutually unbiased if the transition probability across their basis states are all equal to $1/d$.
Each MUB contains at most
$d+1$ bases~\cite{WootF89}; the MUB is complete if the upper bound is attained. In the rest of the paper by a MUB we shall mean such a complete set.

In each prime dimension, MUB was constructed by Ivanovi\'c \cite{Ivan81} motivated by a state estimation problem, which was  generalized to any prime power dimension by Wootters and Fields \cite{WootF89}. This MUB, referred to as the canonical MUB henceforth,  is interesting not only to the physics community but also to a number of other research areas, such as signal processing \cite{Allt80} and operator algebra \cite{Popa83}. Although many different MUB were constructed thereafter, the canonical MUB has been the focus of most literature. Almost all MUB known so far, including the canonical MUB, can be equivalently constructed from \emph{stabilizer states} \cite{Gott97the,LawrBZ02, BandBRV02,GodsR09, Kant12,Zhu15Sh}, which are simultaneous eigenstates of Heisenberg-Weyl (HW)  displacement operators (also known as generalized Pauli operators). The symmetry group of any such MUB is a subgroup of the full Clifford group, the normalizer of the HW group \cite{CormGGP06, Kant12, Zhu15Sh}.
The canonical MUB is known to be covariant with respect to the restricted Clifford group, an important subgroup of the full Clifford group \cite{ApplDF14,Appl09P, Kant12}. On the other hand, little is known about anything that is unique to the canonical MUB. One may wonder whether its prominent status is due to historical reasons or simply because  other MUB are not familiar to most researchers.

In this paper,
we show that the canonical MUB is uniquely determined by a special orbit  of the restricted Clifford group, the orbit with the smallest number  of pure states. The only exception occurs in dimension 3 in which case the  orbit defines a peculiar SIC, known as the Hesse SIC \cite{Zaun11, Appl05, Zhu10, Zhu14S,Zhu15P}. Quite surprisingly, this characterization of the canonical MUB does not refer to any concept related to bases or unbiasedness,  which are the starting points of all other definitions we are aware of.
As a corollary, the canonical MUB is the unique minimal 2-design covariant with respect to the restricted Clifford group except in dimension 3. In addition, our study shows that the canonical MUB defines a highly symmetric frame and positive-operator-valued measure (POVM) \cite{BrooW13,CasaK13,SlomS14}, which are of  interest in signal processing and quantum information science.

In prime dimension $p$, the HW group $\hw$ is generated by the phase operator $Z$ and cyclic shift operator $X$ (together with scalar $\rmi$ when $p=2$),
\begin{equation} \label{eq:HW}
Z|r\rangle=\omega^r|r\rangle, \qquad X|r\rangle=
|r+1\rangle,
\end{equation}
where $\omega=\mathrm{e}^{2\pi \mathrm{i}/p}$,
$r\in \bbF_p$, and $\bbF_p$  is the field
of integers modulo $p$. The (multipartite) HW group in prime power dimension $q=p^n$ is usually defined as the tensor power of $n$ copies of the HW group in dimension $p$.
A stabilizer basis is the common eigenbasis of a maximal abelian subgroup of the HW group, where an abelian subgroup is  maximal if it has order $q$ modular phase factors. Two stabilizer bases constructed in this way are mutually unbiased if and only if the corresponding maximal abelian subgroups have trivial intersection \cite{Gott97the,LawrBZ02, BandBRV02, AschCW07, GodsR09, Kant12,Zhu15Sh}.
When $q$ is a prime, there are only $q+1$ stabilizer bases, so the  stabilizer MUB is unique.
In general,  many different  MUB can be constructed from stabilizer bases \cite{Kant12}, but
most existing literature has focused on a special example, namely, the canonical MUB.

 Before introducing the  canonical MUB, it is convenient to adopt another equivalent definition of the HW group~$\hw$ that  enjoys  nice properties of finite fields. Let $\bbF_q$ be the finite field with $q$ elements, whose elements label the computational basis. Define
\begin{equation} \label{eq:HWmul}
X_u |x\rangle=|x+u\rangle, \quad Z_u |x\rangle=\omega^{\tr (ux)} |x\rangle,
\end{equation}
where $u, x\in \bbF_q$ and "$\tr$" denotes the field theoretic trace \cite{Gros06, Appl09P}. Up to phase factors, the elements of the HW group can be labeled by vectors in $\bbF_q^2$,
\begin{equation}
D_{\vecrm{u} }:=\tau^{\tr(u_1 u_2)} X_{u_1} Z_{u_2},\quad \tau=-\rme^{\rmi \pi/p},
\end{equation}
which satisfy
\begin{equation}
D_{\vecrm{u}}D_{\vecrm{v}} D_{\vecrm{u}}^\dag D_{\vecrm{v}}^\dag  =\omega^{\langle \vecrm{u},\vecrm{v}\rangle},
\quad \langle \vecrm{u},\vecrm{v}\rangle:= \tr(u_2v_1-u_1v_2).
\end{equation}
It is straightforward to verify that the $q$ displacement operators labeled by vectors   on each ray in $\bbF_q^2$ commute with each other and thus generate a maximal abelian subgroup of the HW group. In  this way, each ray in $\bbF_q^2$ determines a maximal abelian subgroup and also a stabilizer basis. Since any two such groups have trivial intersection,  the $q+1$ stabilizer bases thus constructed are mutually unbiased; actually, they form the canonical MUB \cite{LawrBZ02, BandBRV02,GodsR09, Kant12}.

The full Clifford group $\Cli$   is composed of all unitary operators that map displacement operators to displacement operators up to phase factors \cite{BoltRW61I,BoltRW61II,Gott97the}. Any Clifford unitary $U$ induces an $\bbF_p$-linear transformation on $\bbF_q^2$ (identified with $\bbF_p^{2n}$) that labels the displacement operators, that is,
\begin{equation}
UD_{\vecrm{u} } U^\dag=\rme^{\rmi \varphi(\vecrm{u})} D_{f(\vecrm{u}) },
\end{equation}
where $\varphi(\vecrm{u})$ is a phase factor of no concern here, and $f$
satisfies $f(\vecrm{u}+\vecrm{v})= f(\vecrm{u})+f(\vecrm{v})$ as well as  $f(a\vecrm{u})=a f(\vecrm{u})$ for all $a\in \bbF_p$. Note that $D_{\vecrm{u}}U$ for all $\vecrm{u}\in \bbF_q^2$ induce the same transformation $f$ as $U$.
Since conjugation preserves the commutation relation, the linear transformation $f$ also belongs to the
symplectic group $\Sp{2n}{p}$ with respect to a suitable symplectic form. Conversely, for any linear transformation  in  $\Sp{2n}{p}$, there exists $q^2$ Clifford unitaries (up to phase factors) that induce the given linear transformation.
The quotient group $\pc/\phw$ ($\overline{G}$ denotes the collineation group of $G$, that is, $G$ modulo phase factors) can be identified with the symplectic group $\Sp{2n}{p}$.
When $p$ is odd, $\pc$ is also isomorphic to
the affine symplectic group $\ASp{2n}{p}=\Sp{2n}{p}\ltimes \bbF_p^{2n}$ \cite{BoltRW61I,BoltRW61II}.

The restricted Clifford group $\Clir$ (coinciding with the full Clifford group when $q$ is prime) is the subgroup of $\Cli$ obtained if we require that $f$ is in addition $\bbF_q$-linear, that is, $f(a\vecrm{u})=a f(\vecrm{u})$ for all $a\in \bbF_q$ \cite{Gros06,Appl09P}. This  requirement implies that $f(\vecrm{u})=F\vecrm{u}$ with $F$ belonging  to $\Sp{2}{q}\simeq\SL{2}{q}$; that is,
\begin{equation}
F=\begin{pmatrix}
\alpha &\beta \\ \gamma &\delta
\end{pmatrix},\quad \alpha\delta-\beta\gamma=1.
\end{equation}
Accordingly, the quotient group $\pcr/\phw$ can be identified with $\SL{2}{q}$, which has order $q(q^2-1)$, and  the restricted Clifford group $\pcr$ has order $q^3(q^2-1)$. When $q>3$, the group $\SL{2}{q}$ can be generated by the following two elements \cite{Tayl06}:
\begin{equation}
\begin{pmatrix}\nu & 0\\ 0 & \nu^{-1}  \end{pmatrix}, \quad \begin{pmatrix}-1& 1\\ -1 & 0  \end{pmatrix},
\end{equation}
where $\nu$ is a primitive element in $\bbF_q$, that is, a generator of the multiplicative group $\bbF_{q}^*$ composed of nonzero elements in $\bbF_q$.
When $q=2, 3$, the generators can be chosen to be
\begin{equation}
\begin{pmatrix}1 & 1 \\  0& 1  \end{pmatrix}, \quad \begin{pmatrix}0& 1\\ -1 & 0  \end{pmatrix}.
\end{equation}
In any case, the restricted Clifford group can be generated by two Clifford unitaries and a nontrivial displacement operator.

When $q$ is odd, the restricted Clifford group $\pcr$ is isomorphic to the affine special linear group $\ASp{2}{q}\simeq\ASL{2}{q}$ \cite{BoltRW61I,BoltRW61II,Appl09P}. In addition, for each $F\in \SL{2}{q}$,
there exists a Clifford unitary $U_F$ satisfying
\begin{equation}
U_FD_{\vecrm{u} } U_F^\dag =D_{F\vecrm{u} };
\end{equation}
an explicit formula for $U_F$  was derived by Appleby \cite{Appl09P},
\begin{equation}\label{eq:Cliffordunitary}
U_F=\begin{cases}\sum\limits
_{x\in \bbF_q}|\alpha x\rangle\tau^{\tr(\alpha\gamma x^2)}\langle x|, &\beta=0;\\
\frac{1}{\sqrt{q}}\sum\limits
_{x,y\in \bbF_q}|x\rangle\tau^{\tr[\beta^{-1}(\alpha y^2-2xy+\delta
x^2)]}\langle y|, &\beta\neq 0.
\end{cases}
\end{equation}
When $q$ is even, we are not aware of a simple formula for Clifford unitaries except when $q=2$ \cite{Appl05}, but there is an algorithm described in \rcite{DehaM03}.

Let $\{\Pi_j\}$ be the set of projectors corresponding
to basis states in a MUB; the symmetry group of the MUB is the group of all unitary transformations $U$ modulo phase factors that leave the set $\{\Pi_j\}$ invariant; that is, $U\Pi_jU^\dag=\Pi_{\sigma(j)}$ for some permutation $\sigma$ of the indices.
The restricted Clifford group leaves invariant the set of $q+1$ maximal abelian subgroups of the HW group that correspond to the $q+1$ rays in $\bbF_q^2$. Therefore, the symmetry group of the canonical MUB contains the restricted Clifford group. In addition, the canonical MUB is covariant with respect to the restricted Clifford group; by covariance we mean that the group   acts transitively on the states in the MUB (similar definitions apply to SICs, 2-designs, and frames etc.).
What is not so obvious is that  it is also the only MUB with this property. Actually, we can prove an even stronger result, as spelled out in \thref{thm:MinOrbit} below.  It should be noted that the symmetry group of the canonical MUB is the normalizer of
the restricted Clifford group, which is strictly larger than the latter group except  in a prime dimension  \cite{Kant12}.
To be specific, the normalizer has  order $nq^3(q^2-1)$, while the restricted Clifford group has order $q^3(q^2-1)$. This result is related to the fact that $\SL{2}{q}$ has index $n$ in its normalizer within $\Sp{2n}{p}$ \cite{KleiL90}.

To present our main result, we need to introduce a new concept. Given a group $\overline{G}$ of unitary transformations in dimension $d$, the set of all pure states in dimension $d$ form disjoint orbits under the natural action of $\overline{G}$. An orbit of $\overline{G}$ is \emph{characteristic} if its cardinality is the smallest among all orbits. In general, the characteristic orbit may not be unique, but it is unique for many groups of special interest, such as the restricted Clifford group.
\begin{theorem}\label{thm:MinOrbit}
The restricted Clifford group in dimension $q=p^n$ has a unique characteristic orbit, which corresponds to the Hesse SIC when $q=3$ and the canonical MUB otherwise.
\end{theorem}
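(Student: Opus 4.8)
The plan is to reduce the statement to a question about point stabilizers. Since $|\pcr|=q^3(q^2-1)$ is fixed, the characteristic orbit of $\pcr$ is precisely the orbit on which the stabilizer $S:=\mathrm{Stab}_{\pcr}(\ket{\psi})$ of a pure state is as large as possible, so I would determine which pure states have the largest stabilizer. Using the extension $1\to\phw\to\pcr\to\SL{2}{q}\to1$, set $H:=S\cap\phw$. A displacement $D_{\vecrm{u}}$ fixes $\ket{\psi}$ projectively iff $\ket{\psi}$ is an eigenvector of $D_{\vecrm{u}}$, and two such displacements must commute on $\ket{\psi}$; hence $H$ is an isotropic $\bbF_p$-subspace of $\bbF_q^2$, so $|H|\le q$, with equality iff $\ket{\psi}$ lies in the stabilizer basis $B_H$ of the maximal abelian subgroup generated by $H$. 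Conjugation by $S$ permutes the displacements fixing $\ket{\psi}$, so the image $\bar S$ of $S$ in $\SL{2}{q}$ lies in $\mathrm{Stab}_{\SL{2}{q}}(H)$ and $|S|=|H|\,|\bar S|$. When $H$ is Lagrangian this bound is sharp: the preimage of $\mathrm{Stab}_{\SL{2}{q}}(H)$ in $\pcr$ permutes $B_H$, already transitively through $\phw$, so $|S|=q\,|\mathrm{Stab}_{\SL{2}{q}}(H)|$ for every $\ket{\psi}\in B_H$.

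For each of the $q+1$ $\bbF_q$-rays $L\subset\bbF_q^2$ the group $\mathrm{Stab}_{\SL{2}{q}}(L)$ is a Borel subgroup of order $q(q-1)$, so the associated stabilizer-basis states have $|S|=q^2(q-1)$ and sit on an orbit of size $q(q+1)$; since $\pcr$ is transitive on the $q+1$ rays and $\phw$ on each $B_L$, these $q(q+1)$ states — precisely the canonical MUB — form a single orbit. It remains to show that for $q\neq3$ no state does better. If $H$ is Lagrangian but not an $\bbF_q$-ray (which can only happen for $n>1$), the point is that $|\mathrm{Stab}_{\SL{2}{q}}(H)|<q(q-1)$: a subgroup of order $\ge q(q-1)$ fixing an $\bbF_p$-subspace of $\bbF_p$-dimension $n$ must be a Borel and the subspace the corresponding $\bbF_q$-ray, because the scalar extension $\bbF_q\otimes_{\bbF_p}\bbF_q^2$ is the sum of the $n$ distinct Frobenius twists of the natural module, so an $\bbF_q$-irreducible subgroup cannot fix an $\bbF_p$-subspace of dimension $n$, while the remaining subgroups of that order are the Borels, whose only fixed $\bbF_p$-subspace of dimension $n$ is the ray itself. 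If instead $|H|<q$ then $|H|\le q/p$ and $\mathrm{Stab}_{\SL{2}{q}}(H)$ is a proper subgroup of $\SL{2}{q}$ (the natural module being $\bbF_p$-irreducible), hence of order small enough that $|S|<q^2(q-1)$; and if $H=0$ then $S$ embeds in $\SL{2}{q}$, with $S=\SL{2}{q}$ excluded because the $q$-dimensional representation of $\SL{2}{q}$ afforded by the Clifford action has a $1$-dimensional constituent only for $q=3$ (for odd $q$ it splits into irreducibles of dimensions $(q\pm1)/2$; for even $q$ it has none, $\SL{2}{q}$ being perfect when $q\ge4$ and $\SL{2}{2}\simeq S_3$ acting on $\mathbb{C}^2$ by its standard irrep), so again $|S|<q^2(q-1)$. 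Hence for $q\neq3$ the maximum stabilizer order is $q^2(q-1)$, attained exactly on the canonical MUB, which is therefore the unique characteristic orbit.

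For $q=3$ that representation does have a $1$-dimensional constituent, so a line $\mathbb{C}\ket{f}$ is fixed by a complement $\SL{2}{3}$ of $\phw$; then $|S|=24>18=q^2(q-1)$, so the orbit of $\ket{f}$, of size $|\pcr|/24=9$, is characteristic. Since $\SL{2}{3}$ is transitive on the eight nonzero vectors of $\bbF_3^2$, $|\bra{f}D_{\vecrm{u}}\ket{f}|$ is constant for $\vecrm{u}\neq0$, and Parseval for the operator basis $\{D_{\vecrm{u}}/\sqrt{3}\}$ forces this constant to be $1/2$; hence the nine states are equiangular, i.e.\ a SIC, and being HW-covariant in dimension $3$ with a fiducial stabilizer of order $24$ it is the Hesse SIC. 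Uniqueness of the orbit follows from $H^1(\SL{2}{3},\bbF_3^2)=0$ (the module has no $Q_8$-fixed vector), so that all complements of $\phw$ in $\pcr$ are conjugate and each carries a unique invariant line.

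The main obstacle is the step ruling out the "non-canonical" maximal abelian subgroups of the multipartite HW group for $q\neq3$: one must show that among all Lagrangian $\bbF_p$-subspaces of $\bbF_q^2$ only the $\bbF_q$-rays have an $\SL{2}{q}$-stabilizer of the extremal order $q(q-1)$, paying attention to the exceptional and subfield subgroups of $\SL{2}{q}$, which for some $q=p^n$ with $n\ge2$ can exceed $q(q-1)$ in order. This rests on a careful description of how $\SL{2}{q}$, acting $\bbF_q$-linearly on $\bbF_q^2\cong\bbF_p^{2n}$, acts on the $\bbF_p$-subspaces of half dimension, combined with Dickson's classification of the subgroups of $\SL{2}{q}$. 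Checking that the canonical stabilizer-basis states attain exactly $q^2(q-1)$ also requires the transitivity of $\phw$ on $B_L$ rather than a crude estimate.
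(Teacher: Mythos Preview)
Your approach is genuinely different from the paper's, and both are viable, but the paper's route sidesteps exactly the obstacle you flag at the end. The paper never tries to maximize $|S|$ directly. Instead it argues in two steps. First, since $\pcr$ is a unitary $2$-design, every orbit is a projective $2$-design and hence has at least $q^2$ elements; equality would force a $\pcr$-covariant SIC, and the classification of super-symmetric SICs rules this out for $q\neq3$ (this is Lemma~\ref{lem:MinOrbit}). Second, it uses a divisibility trick: since $\overline{T}:=\overline{S}\cap\phw$ is abelian, $|\overline{T}|$ divides $q$, so $q$ divides $[\phw:\overline{T}]=[\phw\overline{S}:\overline{S}]$, which in turn divides $|O|=[\pcr:\overline{S}]$. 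Combining $q\mid|O|$ with $|O|>q^2$ gives $|O|\geq q(q+1)$ in one line, with no case analysis on $|H|$ at all. Only then, for the equality case, does the paper invoke Dickson's list (in Lemma~\ref{lem:SylowNormalizer}) to show that an index-$(q+1)$ subgroup of $\SL{2}{q}$ must be a Borel and hence the Lagrangian must be an $\bbF_q$-ray.

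Your scheme can be completed, but the gaps are real. The sentence ``the remaining subgroups of that order are the Borels'' is false as stated: for $q=9$ the subgroup $2.A_5\subset\SL{2}{9}$ has order $120>72=q(q-1)$, so you cannot conclude from order alone. (It is true that $2.A_5$ fixes no $2$-dimensional $\bbF_3$-subspace, since a simple quotient of order $60$ cannot act on $\bbF_3^2$, but this is an extra argument you would need to supply for each large subgroup in Dickson's list.) Similarly, in the case $0<|H|<q$ the phrase ``hence of order small enough'' is not a proof: you need an explicit bound $|H|\cdot|\mathrm{Stab}_{\SL{2}{q}}(H)|<q^2(q-1)$, which again requires going through the subfield and exceptional subgroups. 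None of this is insurmountable, but the paper's $2$-design plus divisibility argument replaces all of it with two sentences, at the cost of importing the (nontrivial) classification of SICs with doubly transitive symmetry. Your treatment of the $q=3$ case via $H^1(\SL{2}{3},\bbF_3^2)=0$ and the Weil splitting is correct and arguably more self-contained than the paper's reference to the Hesse SIC literature.
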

\begin{remark}
In the case $q=3$, the canonical MUB is the unique orbit of the Clifford group with the second smallest cardinality.
\end{remark}

In addition to providing a simple characterization of the canonical MUB, \thref{thm:MinOrbit} also has other interesting implications.
Recall that a
set of pure states $\{|\psi_j\rangle\}$ is a (projective) \emph{2-design} \cite{Zaun11, ReneBSC04, Scot06, ApplFZ15G} if $\sum_j (|\psi_j\rangle\langle \psi_j|)^{\otimes 2}$ is proportional to the projector onto the symmetric subspace of the bipartite Hilbert space. 2-designs are useful to a number of quantum information processing tasks, such as quantum state estimation \cite{Scot06, ZhuE11, Zhu12the, Zhu14T}.
Any 2-design  $\{|\psi_j\rangle\}$ in dimension $d$ has at least $d^2$ elements, and the lower bound is saturated if and only if the 2-design corresponds to a SIC  \cite{Zaun11, ReneBSC04, Scot06, ApplFZ15G}, that is,
\begin{equation}
|\langle \psi_j|\psi_k\rangle|^2 = \frac{d\delta_{jk} +1}{d+1},\quad j,k=1, 2, \ldots, d^2.
\end{equation}
Other prominent examples of  2-designs include
MUB~\cite{KlapR05M}, which are the focus of this paper.

A set of $K$ unitary operators $\{U_j\}$ is a \emph{unitary 2-design} \cite{Dank05the, DankCEL09, GrosAE07} if it satisfies
\begin{equation}\label{eq:U2design}
\frac{1}{K} \sum _j (U_j\otimes U_j)A(U_j\otimes U_j)^\dag =\int \rmd U (U\otimes U)A(U\otimes U)^\dag
\end{equation}
for any operator $A$ acting on the bipartite Hilbert space, where the integral is taken over the whole unitary group with respect to the normalized Haar measure. Unitary 2-designs have found applications in quantum process estimation \cite{Dank05the, DankCEL09} and  quantum cryptography \cite{Chau05}. When $A$ is the tensor product of a pure state with itself, the right hand side of \eref{eq:U2design} is proportional to the projector onto the symmetric subspace. It follows that any orbit of pure states  of a unitary 2-design (that forms a group) is a (projective) 2-design. In particular, this conclusion applies to the orbits of the restricted Clifford group,  which is a unitary 2-design according to Chau \cite{Chau05} (see also \rcite{GrosAE07}). This means  that  infinitely many inequivalent  2-designs (each with finite number of elements) can be constructed as orbits of the restricted Clifford group. Besides the Hesse SIC and canonical MUB, another notable  2-design of this type is the one composed of MUB balanced states \cite{Appl09P,AmbuSSW14, ApplBD14}.

 Among  Clifford covariant 2-designs, the one with the least number of elements is usually the most interesting from both theoretical and practical perspectives.
\Thref{thm:MinOrbit} shows that, except in dimension 3, any Clifford covariant 2-design has at least $d(d+1)$ elements, and the lower bound is saturated if and only if the 2-design corresponds to the canonical MUB; in other words, the canonical MUB is the unique minimal 2-design covariant with respect to the Clifford group.
In addition, this theorem  implies that the Hesse SIC is the unique SIC that is covariant with respect to the restricted Clifford group, which complements a similar conclusion derived in \rcite{Zhu15P}. Although \Thref{thm:MinOrbit} is already satisfactory as it stands, we have a feeling that even stronger conclusions may be derived.
\begin{conjecture}\label{con:minimal2design}
Any 2-design in dimension $d$ with no more than $d(d+1)$ elements is either a SIC or MUB.
\end{conjecture}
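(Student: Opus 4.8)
\emph{Proof strategy.} My starting point is a rigidity property shared by \emph{every} $2$-design. Writing $\Pi_j=\outer{\psi_j}{\psi_j}$ and $P_{jk}=\tr(\Pi_j\Pi_k)=|\inner{\psi_j}{\psi_k}|^2$, the defining identity $\frac1N\sum_j\Pi_j^{\otimes2}=\frac{2}{d(d+1)}P_{\mrm{sym}}$, with $P_{\mrm{sym}}$ the projector onto the symmetric subspace, is equivalent in the Hilbert--Schmidt picture to $\sum_j\douter{\Pi_j}{\Pi_j}=\frac{N}{d(d+1)}(\mathcal I+\douter{I}{I})$ on operator space. Hence the frame operator of the $\{\dket{\Pi_j}\}$, and therefore the Gram matrix $P$, has the \emph{universal} spectrum $N/d$ (multiplicity $1$), $N/[d(d+1)]$ (multiplicity $d^2-1$) and $0$ (multiplicity $N-d^2$), independent of the particular design. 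Since $P$ is real symmetric with $P_{jj}=1$ and constant row sums $N/d$, this spectrum forces
\begin{equation}\label{eq:Pstruct}
P=\frac{1}{d+1}J+\frac{N}{d(d+1)}R,
\end{equation}
where $J$ is the all-ones matrix and $R$ is the real orthogonal projector onto the range of $P$, of rank $d^2$, with constant diagonal $R_{jj}=d^2/N$ and $R\mathbf 1=\mathbf 1$. In particular $N\ge d^2$, with $R=I$ (so that $P_{jk}=1/(d+1)$ off the diagonal) exactly when $N=d^2$; this already settles the lower boundary, giving a SIC.

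For the remaining range I would set $m:=N-d^2$, so that $1\le m\le d$, and study the complementary projector $Q:=I-R$ of rank $m$. By \eqref{eq:Pstruct} $Q$ has constant diagonal $m/N$, vanishing row sums, and off-diagonal entries $Q_{jk}=\frac{d}{N}-\frac{d(d+1)}{N}P_{jk}$; equivalently $Q$ is the Gram matrix of a balanced unit-norm tight frame of $N$ vectors in $\mathbb R^m$. The goal is to prove that $m\le d$ forces $P_{jk}\in\{0,1/d\}$ and, beyond that, that the orthogonality pattern is that of $d+1$ orthonormal bases, so that necessarily $N=d(d+1)$ (a complete set of MUB) and the intermediate sizes $d^2<N<d(d+1)$ are excluded.

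The essential difficulty is that the $2$-design condition together with positivity of $P$ is insufficient: the bound $P_{jk}\ge0$ only yields $Q_{jk}\le d/N$, which is vacuous for $m\le d$ once the frame is normalized. The content must therefore come from \emph{purity}, i.e.\ from the $\Pi_j$ being genuine rank-one projectors on $\mathbb C^d$. I would encode this through the two relations coupling the ordinary and the entrywise (Hadamard) products of the state Gram matrix $G_{jk}=\inner{\psi_j}{\psi_k}$: first $G^2=(N/d)G$ (because $\tfrac dN G$ is a rank-$d$ projector), and second the bounds $\rg(P^{\circ t})\le\binom{d+t-1}{t}^2$ on the Hadamard powers $P^{\circ t}$, the Gram matrices of $\psi_j^{\otimes t}\otimes\bar\psi_j^{\otimes t}$. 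Feeding these into the Delsarte linear-programming method---expanding each distance-class indicator $A_i$ (supported on the pairs with a fixed value of $P_{jk}$) in the basis $\{P^{\circ0},P^{\circ1},\dots\}$ through an invertible Vandermonde system---is designed to bound the number $s$ of distinct off-diagonal angles by $s\le2$ throughout the regime $N\le d(d+1)$. The case $s=1$ reproduces the SIC; the case $s=2$ then fixes the two values to be $0$ and $1/d$ via the two moment identities implied by \eqref{eq:Pstruct}, after which a combinatorial analysis of the resulting two-distance set (whose orthogonality graph must be a disjoint union of $d+1$ cliques of size $d$) identifies it with a complete set of MUB and in particular forces $N=d(d+1)$.

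The step I expect to be the main obstacle is precisely the passage from the size bound $N\le d(d+1)$ to $s\le2$. The rank and linear-programming estimates comfortably reproduce the lower bound $N\ge d^2$ and the SIC characterization at equality, but they are not known to cap the number of angles in the narrow window just above $d^2$; the coupling between $G^2=(N/d)G$ and $P=G\circ\bar G$ is hard to control, and ruling out exotic multi-distance designs of size slightly larger than $d^2$ is exactly where the argument---and the conjecture---is delicate. As partial evidence and a fallback, the statement can be established under additional group covariance, where \Thref{thm:MinOrbit} already yields it for the restricted Clifford group, and, for small $d$, by solving the finite linear program explicitly.
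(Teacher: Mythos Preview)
The statement you are trying to prove is stated in the paper as a \emph{conjecture}, not a theorem; the paper offers no proof. All the paper says in its favor is that the conclusion holds in the restricted situations where the off-diagonal values of $P_{jk}$ are assumed from the outset to lie in $\{1/(d+1)\}$ or in $\{0,1/d\}$, citing \rcite{KlapR05M}. In other words, the paper already \emph{assumes} $s\le 2$ with the specific values, and then the identification with a SIC or MUB is classical.

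Your proposal is an honest outline rather than a proof, and you correctly flag the crux: deducing $s\le 2$ from the size bound $N\le d(d+1)$ alone. Everything you write up to that point is standard and correct---the spectrum of the Gram matrix, the decomposition \eqref{eq:Pstruct}, the SIC characterization at $N=d^2$, and the moment identities that pin down the two values once $s=2$ is known. But the Delsarte/LP and Hadamard-rank inequalities you invoke are precisely the tools that yield the \emph{lower} bound $N\ge d^2$ and the absolute bounds for $t$-designs; they are not known to cap the number of distinct angles just above the minimum, and no general mechanism in your sketch does so. The purity constraint $G^2=(N/d)G$ coupled with $P=G\circ\bar G$ is genuinely nonlinear and is exactly where the problem is open. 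So your proposal does not close the gap, and since the paper itself leaves the statement as a conjecture, there is nothing to compare your argument against beyond the trivial special cases already noted.
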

\begin{conjecture}\label{con:sym2design}
When $d=q=p^n$ is a prime power, the symmetry group of any such 2-design has order at most $nq^3(q^2-1)$, and the upper bound is saturated if and only if the 2-design corresponds to the Hesse SIC in dimension~3 or the canonical MUB.
\end{conjecture}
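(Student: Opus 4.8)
The plan is to grant \cref{con:minimal2design}, which reduces the problem to two cases: a SIC with $d^2$ states and a complete set of MUB with $d(d+1)$ states. In each case I would bound the order of the symmetry group $G$ and determine when the bound $nq^3(q^2-1)=|N(\pcr)|$, the order of the normalizer of the restricted Clifford group, is attained. The common tool is the following rigidity. Since a 2-design is in particular a 1-design, the overlap magnitudes $|\inner{\psi_j}{\psi_k}|$ link all states into one connected graph, so any symmetry fixing every state up to phase must be scalar; hence $G$ embeds into the symmetric group on the states. Writing $U\ket{\psi_j}=\rme^{\rmi\theta_j}\ket{\psi_{\sigma(j)}}$, the phases cancel around every cycle, so $G$ must preserve all triple products $\inner{\psi_a}{\psi_b}\inner{\psi_b}{\psi_c}\inner{\psi_c}{\psi_a}$. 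These Bargmann invariants are the data that cut the combinatorial automorphism group down to size.

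For the MUB case I would use the homomorphism $\pi\colon G\to S_{q+1}$ induced by the permutation of the $q+1$ bases; this is well defined because bases are exactly the maximal sets of mutually orthogonal states, a relation preserved by every symmetry. The kernel $K$ fixes each basis setwise, so each of its elements acts inside a fixed orthonormal basis as a permutation dressed by diagonal phases. Imposing invariance of the triple products linking three distinct bases forces these phases to follow the additive characters defining the operators in \eqref{eq:HWmul}, so that $K$ is pinned down to a Heisenberg--Weyl--like abelian normal subgroup (of order $q^2$, up to the central involution $-\mathbf 1$ of $\SL{2}{q}$). The image $\pi(G)$ acts on the $q+1$ bases, which I would identify with the $q+1$ rays of $\bbF_q^2$; the within-basis bijections, constrained by the triple products, then embed $\pi(G)$ into $\PSL{2}{q}\rtimes\Gal(\bbF_q/\bbF_p)$, of order $nq(q^2-1)/\gcd(2,q-1)$. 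Multiplying the two orders gives $|G|=|K|\,|\pi(G)|\le nq^3(q^2-1)$. Equality forces $K$ to be the full Heisenberg--Weyl group $\phw$ and $\pi(G)$ the full $\SL{2}{q}$ extended by $\Gal(\bbF_q/\bbF_p)$; reconstructing the unitaries then shows $G$ is conjugate to $N(\pcr)$, whence \thref{thm:MinOrbit} identifies the design as the canonical MUB.

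For the SIC case the overlap magnitudes are all equal, so the combinatorial structure is trivial and the entire constraint lives in the triple products. For $d=3$ the Hesse SIC is Heisenberg--Weyl covariant with extended symmetry group the Hessian group $(\bbZ_3\times\bbZ_3)\rtimes\SL{2}{3}$ of order $216=nq^3(q^2-1)$; I would show it is the unique three-dimensional SIC whose triple products admit a symmetry group of this order, the remaining SICs in dimension~$3$ forming a one-parameter family with strictly smaller symmetry. For $q\ge4$ I would bound $|G|$ through the stabilizer of a single SIC projector: this point stabilizer injects into $\mathrm{PU}(d-1)$ while preserving the rigid triple-product pattern, which confines it to a group of order $O(d)$; since the orbit has size at most $d^2$, one gets $|G|=O(d^3)$, far below $nq^3(q^2-1)$ for every $q\ge4$. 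Hence no SIC saturates the bound except the Hesse SIC.

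The hard part is that both arguments are transparent only when $G$ acts transitively on the design (a single orbit, so that $|G|$ equals the orbit size times the point stabilizer) and when the design actually carries the Heisenberg--Weyl structure. For a general minimal 2-design I cannot assume covariance, nor that $G$ contains the Heisenberg--Weyl group: the triple products of a hypothetical non-stabilizer MUB, or of a SIC not known to be group covariant, are unconstrained inputs, and the $O(d)$ bound on the SIC point stabilizer leans on group covariance of the fiducial. The crux is therefore a rigidity theorem asserting that any subgroup of $\mathrm{PU}(q)$ of order close to $nq^3(q^2-1)$ that preserves the triple-product pattern of a minimal 2-design is conjugate into $N(\pcr)$. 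Proving this---equivalently, excluding exotic large symmetry groups for putative non-stabilizer minimal 2-designs, a difficulty entangled with Zauner's conjecture for SICs---is the obstacle that keeps the statement at the level of a conjecture; a complete proof would likely demand either a classification of minimal 2-designs up to unitary equivalence or a direct bound on the finite subgroups of $\mathrm{PU}(q)$ that admit a minimal 2-design as an orbit.
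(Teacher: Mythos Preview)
The statement you are attempting to prove is labelled a \emph{Conjecture} in the paper, and the paper offers no proof of it; only a sentence of motivation follows, pointing to the ubiquity of Clifford structure in highly symmetric discrete objects. There is therefore no paper proof to compare against, and your proposal should be read as an outline of a possible attack rather than a competing argument.

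As an outline it is honest about its own limitations, and the final paragraph correctly identifies the real obstruction. Two points deserve emphasis. First, you begin by \emph{granting} \cref{con:minimal2design}; since that conjecture is itself open, everything downstream is conditional, and the paper does not claim \cref{con:sym2design} follows from \cref{con:minimal2design}. Second, even under that assumption, several intermediate steps are not yet arguments. In the MUB branch, the assertion that the triple products force the kernel $K$ to be ``Heisenberg--Weyl--like'' and the identification of the $q+1$ bases with the rays of $\bbF_q^2$ presuppose that the MUB already carries the additive/field structure you are trying to establish; for a putative non-stabilizer MUB there is no a~priori labelling by $\bbF_q$, and the step ``$\pi(G)$ embeds into $\PSL{2}{q}\rtimes\Gal(\bbF_q/\bbF_p)$'' is exactly the rigidity theorem you later concede is missing. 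In the SIC branch, the $O(d)$ bound on the point stabilizer is asserted, not derived; without group covariance of the fiducial there is no mechanism to cut the stabilizer down, and this is precisely where Zauner-type questions enter. So the proposal is a reasonable roadmap, but it does not close any gap the paper leaves open.
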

The first conjecture holds in the special cases in which the pairwise inner products are restricted to the single value of $1/(d+1)$ or two values of 0 and $1/d$ \cite{KlapR05M}. In general, 2-designs with the number of elements close to the minimum $d^2$ are quite rare, and
pairwise inner products in such a 2-design usually take on special values. Therefore, it is reasonable to expect that this conjecture holds in general. The second conjecture is more plausible  given that discrete structures with high symmetry in finite-state quantum mechanics, such as Wootters discrete Wigner functions \cite{Woot87, Gros06, Zhu15P} and SICs \cite{Zaun11, ReneBSC04, Appl05, Zhu10, Zhu14S},  are often connected with the Clifford group or the restricted Clifford group. With the group theoretic structure at our disposal, it is even likely that this  conjecture may be proved within a couple of years after we know more about MUB,  2-designs, and Clifford groups.

Before proving \thref{thm:MinOrbit}, we need to introduce three technical lemmas. The first one is reproduced from Lemma 7.2 in the author's thesis \cite{Zhu12the} (see also \rscite{Zhu14S,Zhu15Sh,Zhu15P}).
\begin{lemma}\label{lem:orbits}
Suppose $G$ is a subgroup of the symmetry group of a SIC. Then
the number of orbits  of
$G$ on the SIC is equal to the sum of squared multiplicities of all the
inequivalent irreducible components of $G$. In particular, $G$ acts transitively on the SIC if and only if it is irreducible.
\end{lemma}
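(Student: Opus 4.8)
The plan is to reduce the statement to computing, in two different ways, the multiplicity of the trivial representation inside a single $G$-module. First I would record that the $d^2$ SIC projectors $\Pi_j=\outer{\psi_j}{\psi_j}$ span the whole operator space $M_d(\mathbb{C})$: their Gram matrix under the Hilbert--Schmidt inner product equals $\tfrac{d}{d+1}I+\tfrac{1}{d+1}J$, where $J$ is the all-ones matrix, since $\Tr(\Pi_j\Pi_k)=(d\delta_{jk}+1)/(d+1)$; its eigenvalues $d$ and $d/(d+1)$ are nonzero, so the $\Pi_j$ are linearly independent and hence form a basis of the $d^2$-dimensional space $M_d(\mathbb{C})$. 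Because $G$ lies in the symmetry group of the SIC, conjugation $A\mapsto UAU^\dag$ by each $U\in G$ permutes this basis, so the adjoint representation of $G$ on $M_d(\mathbb{C})$ is isomorphic, as a $G$-module, to the permutation module $\mathbb{C}\{\Pi_j\}$ carried by the action of $G$ on the $d^2$ SIC elements. (The adjoint action is an honest linear representation of $G$ even though $G$ acts only projectively on $\mathbb{C}^d$, since the phase of $U$ cancels in $UAU^\dag$; the same spanning property also forces $G$ to embed into the symmetric group on the $d^2$ labels, so $G$ is finite.)

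Next I would count the multiplicity of the trivial representation in this module in two ways. On the permutation side, the orbit-counting (Burnside) lemma gives that this multiplicity equals the number of $G$-orbits on the SIC. On the operator side, the multiplicity of the trivial representation equals $\dim\bigl(M_d(\mathbb{C})^G\bigr)$, the dimension of the space of operators commuting with every element of $G$, i.e.\ the commutant of $G$ on $\mathbb{C}^d$. Passing, if necessary, to a finite central extension $\tilde G$ of $G$ acting on $\mathbb{C}^d$ by an ordinary unitary representation (which leaves the commutant unchanged) and writing the isotypic decomposition $\mathbb{C}^d\cong\bigoplus_i V_i\otimes\mathbb{C}^{m_i}$, with the $V_i$ pairwise inequivalent irreducibles of multiplicities $m_i$, Schur's lemma identifies the commutant with $\bigoplus_i M_{m_i}(\mathbb{C})$, of dimension $\sum_i m_i^2$. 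Equating the two counts yields the asserted identity: the number of $G$-orbits on the SIC equals $\sum_i m_i^2$, the sum of squared multiplicities of the inequivalent irreducible components of $G$ on $\mathbb{C}^d$.

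The ``in particular'' is then immediate: $G$ acts transitively on the SIC iff it has a single orbit there, iff $\sum_i m_i^2=1$, iff $\mathbb{C}^d$ decomposes into one irreducible component of multiplicity one, iff $G$ acts irreducibly on $\mathbb{C}^d$.

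This is essentially routine representation theory, so I do not expect a genuine obstacle; the two points that need care are the linear independence of the SIC (handled by the explicit Gram matrix above, where the defining property of a SIC enters only through the value of $\Tr(\Pi_j\Pi_k)$) and the bookkeeping around the projective action of $G$ on $\mathbb{C}^d$, which the central-extension remark disposes of. One could also avoid central extensions altogether by working over $\mathbb{R}$ with the real span of the Hermitian $\Pi_j$ and noting that the Hermitian part of $\bigoplus_i M_{m_i}(\mathbb{C})$ again has real dimension $\sum_i m_i^2$.
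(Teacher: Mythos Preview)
Your argument is correct. The paper itself does not supply a proof of this lemma; it merely cites Lemma~7.2 of the author's thesis \cite{Zhu12the} and related references, so there is no in-paper proof to compare against line by line. That said, your approach is the natural one and almost certainly coincides with the cited proof: the key observation, that the $d^2$ SIC projectors span $M_d(\mathbb{C})$ and hence the permutation module on the SIC is isomorphic to the adjoint module $A\mapsto UAU^\dag$, is exactly the standard mechanism behind results of this type. The two counts of the trivial multiplicity---orbit number on the permutation side, commutant dimension $\sum_i m_i^2$ on the adjoint side---are then forced to agree.

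The only cosmetic remark is that your handling of projectivity via a central extension is fine but slightly heavier than necessary: since you only ever use the adjoint action $A\mapsto UAU^\dag$ and the commutant, both of which are insensitive to phases, you could simply work with $G$ as a subgroup of the projective unitary group throughout and speak of irreducible \emph{projective} components; the Schur-type identification of the commutant goes through verbatim. Your alternative real-Hermitian remark at the end makes the same point.
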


\begin{lemma}\label{lem:MinOrbit}
Any orbit of the restricted Clifford group in dimension $q=p^n$ has more than $q^2$ elements except when $q=3$ and the orbit corresponds to the Hesse SIC.
\end{lemma}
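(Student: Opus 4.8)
The plan is to show that a $\pcr$-orbit of at most $q^2$ pure states must in fact be a symmetric informationally complete measurement (SIC), and that such an orbit exists only in dimension~$3$. Fix a pure state $|\psi\rangle$ and let $S$ be its stabilizer in $\pcr$; the orbit has at most $q^2$ elements precisely when $|S|\ge q^3(q^2-1)/q^2=q(q^2-1)=|\SL{2}{q}|$. Using the exact sequence $1\to\phw\to\pcr\to\SL{2}{q}\to1$ with quotient map $\pi$, write $|S|=|L|\,|\pi(S)|$, where $L:=S\cap\phw$ is the stabilizer of $|\psi\rangle$ in the Heisenberg--Weyl group. The displacement operators fixing $|\psi\rangle$ pairwise commute, so $L$ corresponds to an isotropic subspace of $(\bbF_q^2,\langle\cdot,\cdot\rangle)$; hence $|L|\le q$ and $[\SL{2}{q}:\pi(S)]\le q$. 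For $q=2$ this already settles matters: no stabilizer of a pure state has order greater than $4<6=q(q^2-1)$, so every orbit has at least $6>q^2$ elements.

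Next I would force $\pi(S)=\SL{2}{q}$. When $q$ is prime, $L$ is either $0$ or a line of $\bbF_q^2$; in the latter case $\pi(S)$ lies in the Borel subgroup stabilizing that line, of order $q(q-1)$, so that $|S|\le q^2(q-1)<q(q^2-1)$, a contradiction — hence $L=0$ and $|\pi(S)|=|S|\ge|\SL{2}{q}|$ forces $\pi(S)=\SL{2}{q}$. When $q=p^n$ with $n\ge2$, the minimal index of a proper subgroup of $\SL{2}{q}$ is $q+1>q$ except when $q=9$, so again $\pi(S)=\SL{2}{q}$; in the remaining case $q=9$ one verifies that the index-$6$ subgroups of $\SL{2}{9}$ (isomorphic to $2\cdot A_5$) stabilize no maximal isotropic subspace of $\bbF_9^2$ — for instance, the natural module restricted to such a subgroup is irreducible over $\bbF_3$ — so $|S|<|\SL{2}{q}|$ unless $\pi(S)=\SL{2}{q}$.

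Now assume $\pi(S)=\SL{2}{q}$. Then $L$ is $\SL{2}{q}$-invariant; but $\bbF_q^2$ is an irreducible $\bbF_p[\SL{2}{q}]$-module (over $\overline{\bbF}_p$ it is the direct sum of the $n$ Galois-conjugate Frobenius twists of the natural module, permuted transitively by $\Gal(\bbF_q/\bbF_p)$), so $L=0$. Hence $|\psi\rangle$ is not a stabilizer state, the orbit consists of exactly the $q^2$ distinct states $D_{\vec u}|\psi\rangle$ with $\vec u\in\bbF_q^2$, and for every $F\in\SL{2}{q}$ there is a unique $\vec w(F)\in\bbF_q^2$ with $U_F|\psi\rangle\propto D_{\vec w(F)}|\psi\rangle$. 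Using $U_FD_{\vec v}U_F^\dag\propto D_{F\vec v}$ one sees that $|\langle\psi|D_{\vec v}|\psi\rangle|$ depends only on the $\SL{2}{q}$-orbit of $\vec v$; since $\SL{2}{q}$ has the two orbits $\{0\}$ and $\bbF_q^2\setminus\{0\}$ on $\bbF_q^2$, and since $\sum_{\vec v\in\bbF_q^2}D_{\vec v}|\psi\rangle\langle\psi|D_{\vec v}^\dag=qI$, we get $|\langle\psi|D_{\vec v}|\psi\rangle|^2=1/(q+1)$ for all $\vec v\neq0$. Thus the orbit is a Heisenberg--Weyl covariant SIC, and it is moreover covariant under $\pcr$.

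It remains to rule this out for $q\neq3$. The operator $U_F$ sends the SIC state labelled by $\vec u$ to the one labelled by $F\vec u+\vec w(F)$, so $\SL{2}{q}$ acts on the SIC by the affine maps $\vec u\mapsto F\vec u+\vec w(F)$, with $F\mapsto\vec w(F)$ a $1$-cocycle. Since $H^1(\SL{2}{q},\bbF_q^2)=0$, the cocycle is a coboundary, so this affine action has a fixed point $\vec v_0$ (equivalently, by \lref{lem:orbits}, $\SL{2}{q}$ has exactly two orbits on the SIC, one of them the singleton $\{D_{\vec v_0}|\psi\rangle\}$). Therefore $D_{\vec v_0}|\psi\rangle$ is fixed by every $U_F$ and spans a one-dimensional subrepresentation of the representation of $\SL{2}{q}$ on $\mathbb{C}^q$. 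But for $q\ge4$ that representation (the Weil representation) has all its irreducible constituents of dimension at least $2$ — for $q$ even one reads this inside the relevant, possibly projective, representation, and if $\pcr$ is a non-split extension the very existence of the complement $S$ is already contradictory — so no such orbit can exist, and every $\pcr$-orbit in dimension $q\ge4$ has more than $q^2$ elements. For $q=3$, by contrast, the Weil representation of $\SL{2}{3}$ on $\mathbb{C}^3$ does contain a one-dimensional summand; the state spanning it is a fiducial of the Hesse SIC, its $\pcr$-orbit has exactly $q^2=9$ elements, and the analysis above shows it is the only orbit of size at most $9$. The step I expect to be the main obstacle is forcing $\pi(S)=\SL{2}{q}$ in the sporadic dimension $q=9$ (and, more generally, pinning down which proper subgroups of $\SL{2}{q}$ can have index at most $q$), because there the clean index bound breaks down and one must combine it with the geometric constraint that $\pi(S)$ fixes a large isotropic subspace of $\bbF_q^2$.
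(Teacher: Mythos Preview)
Your approach is genuinely different from the paper's, and for odd $q\ge 5$ it essentially works. The paper's proof is much shorter because it invokes two external facts: (i) the restricted Clifford group is a unitary $2$-design, so every orbit is a projective $2$-design and hence has at least $q^2$ elements, with equality forcing a SIC; and (ii) such a SIC is automatically \emph{super-symmetric} (the stabilizer $\SL{2}{q}$ acts transitively on nontrivial displacement operators), and the author's classification of super-symmetric SICs (from \rscite{Zhu14S,Zhu15P}) leaves only the Hesse SIC. Your route replaces (i) by a direct stabilizer analysis---bounding $|L|\le q$, forcing $\pi(S)=\SL{2}{q}$ via minimal subgroup indices, and deducing the SIC property from transitivity of $\SL{2}{q}$ on $\bbF_q^2\setminus\{0\}$---and replaces (ii) by the decomposition of the Weil representation together with $H^1(\SL{2}{q},\bbF_q^2)=0$. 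What your approach buys is independence from the $2$-design machinery and from the super-symmetric SIC classification; what the paper's approach buys is brevity and uniform treatment of all $q$.

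The genuine gap is the even-$q$ endgame. For odd $q\ge5$ the Weil representation on $\mathbb{C}^q$ splits as irreducibles of dimensions $(q\pm1)/2$, so your ``no one-dimensional summand'' claim is correct. For even $q$ you only hedge: you assert that either the extension $1\to\phw\to\pcr\to\SL{2}{q}\to1$ is non-split (so no complement $S$ exists) or the resulting representation has no one-dimensional constituent, but you justify neither alternative. You would need to establish, e.g., that the representation of any complement on $\mathbb{C}^q$ is the Steinberg representation (irreducible of dimension $q$), or else fall back on the non-existence of HW-covariant SICs in even prime-power dimensions $\ne 2,8$ (the paper cites \rcite{GodsR09} for this) together with a separate argument for $q=8$. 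Your $H^1=0$ claim is also asserted without proof; for odd $q$ it follows quickly from the central involution $-I$ acting as $-1$ on $\bbF_q^2$, but for even $q$ this trick fails and you give no substitute. (Incidentally, $H^1$ is not really needed: the stabilizer $S$ itself already furnishes a copy of $\SL{2}{q}$ fixing $|\psi\rangle$, so you can argue directly with the representation of $S$ rather than conjugating to the ``standard'' $\{U_F\}$.) Finally, your $q=2$ sentence asserts that no stabilizer has order exceeding $4$, but your preceding inequalities only give $|S|\le |L|\cdot|\SL{2}{2}|\le 12$; you should either justify the sharper bound or note that the case $|S|\ge 6$ forces $\pi(S)=\SL{2}{2}$ and $L=0$, whence the $2$-dimensional representation of $S_3$ would need a trivial summand, which it does not have.
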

\begin{proof}
The Hesse SIC  in dimension 3    is  generated by the HW group   from the fiducial ket
\begin{equation}\label{eq:GodSIC}
|\psi_3\rangle \rep\frac{1}{\sqrt{2}}
(0, 1, -1)^\rmT.
\end{equation}
It is indeed covariant with respect to the (restricted)  Clifford group \cite{Appl05, Zhu10, Zhu14S}.

The restricted Clifford group is a unitary 2-design \cite{Chau05,GrosAE07}, so any orbit constitutes a  2-design. Any 2-design has at least $q^2$ elements, and the lower bound is saturated if and only if the 2-design corresponds to a SIC \cite{Zaun11, ReneBSC04, Scot06, ApplFZ15G}. Suppose the orbit of the restricted Clifford group has $q^2$ elements, then it is a SIC and  is covariant with respect to the HW group according to \lref{lem:orbits}. In any even prime power dimension other than 2 and 8, no SIC is covariant with respect to the HW group (the one defined in \eref{eq:HWmul}) according to \rcite{GodsR09}, which leads to a contradiction. In general, the stabilizer of each state on the orbit is isomorphic to $\SL{2}{q}$ and acts transitively on nontrivial displacement operators. Consequently, the restricted Clifford group acts doubly transitively on the states in the SIC; that is,  the SIC is super-symmetric as defined by the author in \rscite{Zhu14S, Zhu15P}. However, the SIC in dimension~2, the Hesse SIC in dimension~3, and the set of Hoggar lines in dimension~8 are the only three super-symmetric SICs; in addition, only the Hesse SIC is covariant with respect to the restricted Clifford group \cite{Zhu14S, Zhu15P}\footnote{Although the general results in \rscite{Zhu14S, Zhu15P} were derived by virtue of the classification of finite simple groups (CFSG),  the conclusion needed  here does not rely on the CFSG, since we are only concerned with super-symmetric SICs that are covariant with respect to the restricted Clifford group in a prime power dimension.}.
\end{proof}

\begin{lemma}\label{lem:SylowNormalizer}
Suppose  $\overline{S}$ is a subgroup of the restricted  Clifford group $\pcr$ of index $q(q+1)$ whose intersection  with the HW group $\overline{S}\cap\phw$ has order $q$. Then any subgroup $T$ of $D$ with collineation group $\overline{T}=\overline{S}\cap\phw$ is a maximal abelian subgroup (up to a phase factor) of the HW group corresponding to a ray in $\bbF_q^2$, and any fixed point of  $\overline{S}$ belongs to the canonical MUB.
\end{lemma}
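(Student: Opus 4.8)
The plan is to translate the numerical hypotheses on $\overline{S}$ into facts about its image in $\SL{2}{q}$ and about $W:=\overline{S}\cap\phw$, and then to identify $W$ as a ray by a Sylow argument. From $|\pcr|=q^3(q^2-1)$ and $[\pcr:\overline{S}]=q(q+1)$ one gets $|\overline{S}|=q^2(q-1)$, so if $\pi\colon\pcr\to\pcr/\phw\cong\SL{2}{q}$ denotes the quotient map, the hypothesis $|W|=q$ gives $|\pi(\overline{S})|=q(q-1)$; write $H:=\pi(\overline{S})$, a subgroup of index $q+1$ in $\SL{2}{q}$. Identifying $\phw$ with $\bbF_q^2$ by sending the class of $D_{\vecrm{u}}$ to $\vecrm{u}$, the group $\SL{2}{q}$ acts on $\phw$ through its defining two-dimensional linear representation over $\bbF_q$; since $\phw\trianglelefteq\pcr$ we have $W\trianglelefteq\overline{S}$, so $W$ is an $H$-invariant $\bbF_p$-subspace of $\bbF_q^2$ of $\bbF_p$-dimension $n$. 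The subtlety to keep in mind is that $W$ is, a priori, only an $\bbF_p$-subspace; the content of the lemma is that the group structure forces it to be an $\bbF_q$-line.

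The crux is to pin down $W$. Because $p\mid q$ we have $\gcd(p,q^2-1)=1$, so a Sylow $p$-subgroup of $\SL{2}{q}$ has order $q$; as $q$ divides $|H|=q(q-1)$, the group $H$ contains such a Sylow $p$-subgroup $P$, and $W$ is then $P$-invariant. After conjugating so that $P$ is the group of upper unitriangular matrices $\bigl(\begin{smallmatrix}1&b\\0&1\end{smallmatrix}\bigr)$, $b\in\bbF_q$, one checks directly that the only $P$-invariant $\bbF_p$-subspace of $\bbF_q^2$ of dimension $n$ is the $\bbF_q$-line $L$ fixed pointwise by $P$: any $P$-invariant subspace containing a vector with nonzero second coordinate contains all of $L$ (subtract the $P$-translates of such a vector) together with that vector, hence properly contains $L$ and has $\bbF_p$-dimension exceeding $n$; so $W$, having dimension $n$, contains no such vector, whence $W\subseteq L$ and therefore $W=L$. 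Thus $W$ is a ray in $\bbF_q^2$, and consequently any $T\le\hw$ with $\overline{T}=W$ coincides, modulo phase factors, with the maximal abelian subgroup of the HW group attached to the ray $W$; in particular $T$ is a maximal abelian subgroup up to a phase.

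For the last assertion, let $|\psi\rangle$ be a fixed point of $\overline{S}$. Since $\overline{T}=W\le\overline{S}$, the line $\mathbb{C}|\psi\rangle$ is invariant under every element of $T$, so $|\psi\rangle$ is a joint eigenvector of $T$, equivalently of the maximal abelian subgroup of the HW group indexed by the ray $W$ (the phase factors act as scalars and do not change eigenvectors). But the joint eigenvectors of that maximal abelian subgroup are exactly the $q$ states forming the member of the canonical MUB labelled by $W$; hence $|\psi\rangle$ belongs to the canonical MUB. I expect the only step demanding care is the middle one — arguing that $\pi(\overline{S})$ meets a full Sylow $p$-subgroup of $\SL{2}{q}$ and that such a subgroup admits the single invariant subspace $L$ of the relevant $\bbF_p$-dimension; once $W$ is recognized as a ray the remainder is bookkeeping with the ray--maximal-abelian-subgroup--MUB dictionary already in place.
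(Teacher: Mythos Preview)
Your proof is correct and takes a genuinely more elementary route than the paper. The paper identifies $R=\overline{S}/\overline{T}$ as an index-$(q+1)$ subgroup of $\SL{2}{q}$ and then invokes Dickson's classification of subgroups of $\PSL{2}{q}$ to conclude that $R$ must be (conjugate to) a full Borel subgroup; it then asserts, without detailed argument, that the Borel has a unique nontrivial $\bbF_p$-invariant subspace in $\bbF_q^2$, which pins down $\overline{T}$ as a ray. You sidestep Dickson entirely: from $|H|=q(q-1)$ you extract only that $H$ contains a Sylow $p$-subgroup $P$ of $\SL{2}{q}$, and then you give a direct, self-contained computation showing that $P$ already has a unique invariant $\bbF_p$-subspace of dimension $n$, namely the $\bbF_q$-line it fixes. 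This buys you a proof that does not depend on an external subgroup classification and that makes the ``$\bbF_p$-subspace forced to be $\bbF_q$-line'' step completely explicit; the price is that you learn less about $H$ itself (you never show it is the full Borel), but that extra information is not needed for the lemma. Your handling of the fixed-point assertion at the end is essentially identical to the paper's.
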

\begin{proof}
Note that  $R:=\overline{S}/\overline{T}$ can be identified with an index-($q+1$) subgroup of $\SL{2}{q}$.
Let $A$ be the center of $\SL{2}{q}$, which is trivial for even $q$ and has order 2 otherwise \cite{Dick58book}.   Then $RA/A$ can be identified as a subgroup of  the projective special linear group $\PSL{2}{q}:=\SL{2}{q}/A$ of index either $q+1$ or $(q+1)/2$ (for odd $q$).
According to Chapter XII in \rcite{Dick58book}, $\PSL{2}{q}$ has no subgroup of index $(q+1)/2$, and all subgroups of index $q+1$ are normalizers of Sylow $p$-subgroups, which are conjugate to each other. It follows that all  index-($q+1$) subgroups of $\SL{2}{q}$
are normalizers of Sylow $p$-subgroups and are conjugate to each other. Without loss of generality, we may assume that  $R$
is  composed of the following elements,
\begin{equation}
\begin{pmatrix}
\alpha &0\\ \gamma &\alpha^{-1}
\end{pmatrix},\quad \alpha \in \bbF_{q}^*,\quad \gamma \in \bbF_{q}.
\end{equation}
Note that $R$ has a unique nontrivial invariant subspace in $\bbF_q^2$, which happen to be the ray  composed of vectors of the form $(0, \xi)^T$ with $\xi\in \bbF_q$. The same conclusion still holds even if $R$ is taken as a linear group on $\bbF_{p}^{2n}$. Since $\overline{T}$ is a nontrivial normal subgroup of $\overline{S}$, it corresponds to a nontrivial invariant subspace of $R$ over $\bbF_{p}^{2n}$ and is thus uniquely determined by the ray. Therefore, $T$ is a maximal abelian subgroup of the HW group corresponding to a ray in $\bbF_q^2$, and any fixed point of  $\overline{S}$ belongs to the canonical MUB.
\end{proof}

Now we are ready to prove our main result.
\begin{proof}[Proof of \thref{thm:MinOrbit}]
The conclusion follows from \lref{lem:MinOrbit} when $q=3$. Suppose $O$ is a characteristic orbit of the restricted Clifford group other than the Hesse SIC. Let $S$ be the stabilizer of a given element in $O$, $T=S\cap\hw$, and $R=\overline{S}/\overline{T}$. Then $T$ is an abelian subgroup of $D$; otherwise, each irreducible component of $T$ has a degree of at least $p$, so that $T$ cannot stabilize any pure state.
As a consequence, the order of $\overline{T}$ divides $q$, and its index in $\phw$ is divisible by $q$. Since $|O|$ is equal to the index of $\overline{S}$ in the restricted Clifford group $\pcr$, it follows that $|O|$ is divisible by $q$, which implies that $|O|\geq q(q+1)$ given that $|O|>q^2$ according to \lref{lem:MinOrbit}. If the lower bound is saturated,
then $|\overline{S}|$ has index $q(q+1)$ in  $\pcr$, $|\overline{T}|=q$, and $R$ has index $q+1$ in $\SL{2}{q}$, that is,
$|R|=q(q-1)$ (recall that $\SL{2}{q}$ has order $q(q^2-1)$). Consequently, $T$ is a maximal abelian subgroup of the HW group, and $O$ is composed of  stabilizer states. Observing that the HW group acts transitively on the states in each stabilizer basis, we conclude that $O$ is composed of $q+1$ stabilizer bases. In addition, the $q+1$ bases are mutually unbiased because any orbit of the Clifford group is a 2-design, while $q+1$ bases form a 2-design if and only if they are mutually unbiased.  Moreover, any fixed  point of  $\overline{S}$ belongs to the canonical MUB according to \lref{lem:SylowNormalizer}, so the orbit $O$ defines the canonical MUB.
\end{proof}

Before concluding this paper, we mention a surprising   application of our  study  to frame theory \cite{CasaK13}.
A frame in dimension $d$ is a set of pure states $B=\{|\psi_j\rangle\}$ that  spans the whole Hilbert space; the frame is tight if $\sum_j|\psi_j\rangle\langle \psi_j|$ is proportional to the identity, in which case the frame  defines a POVM after scaling. The symmetry group $\overline{H}$ of the frame is the group composed of all unitary transformations that leave the frame invariant (as in the case of a MUB).  Note that all pure states (including those not contained in $B$) form disjoint orbits under the action of $\overline{H}$. Assuming that $\overline{H}$ is irreducible, the frame $B$ is necessarily tight. The frame $B$ is \emph{highly symmetric} if its symmetry group acts transitively on $B$  and the stabilizer of each state in $B$ is not properly contained in the stabilizer of any other pure state.  Highly symmetric frames are interesting in the study of  signal processing; the POVMs constructed from them, called highly symmetric POVMs,  are  useful in the study of entropic uncertainty relations, informational power, etc.  \cite{BrooW13, CasaK13, SlomS14}.

Given an irreducible subgroup $\overline{G}$ of the (projective) unitary group, if it has a unique characteristic orbit, then the frame corresponding to the orbit is automatically highly symmetric, though the converse does not hold in general.
In addition, a frame building on an orbit of $\overline{G}$ is highly symmetric if all the fixed points of the stabilizer of each state on the orbit also belong to the orbit. In view of this observation, \thref{thm:MinOrbit} implies that the Hesse SIC and canonical MUB are  highly symmetric, thereby   providing an infinite family of highly symmetric frames and highly symmetric POVMs. It should be noted that the property of being highly symmetric is not unique to the canonical MUB. Actually, any group covariant frame composed of stabilizer bases is also highly symmetric. In particular, this conclusion applies to the frame composed of   all stabilizer states. The implications of this observation deserve further study.

In summary, except in dimension~3, the canonical MUB is the unique orbit of the restricted Clifford group with the minimal cardinality; it is also the unique minimal 2-design covariant with respect to the restricted Clifford group. Remarkably, this MUB can be characterized in such a simple way that does not involve any concept related to bases or unbiasedness.  In addition,  the Hesse SIC in dimension 3 is the unique SIC that is covariant with respect to the restricted Clifford group. SICs and MUB are often dubbed  as sets of states that have no right to exist. Our study provides a new perspective for understanding this problem, which may stimulate further progress along this direction. The ideas introduced here are also useful to studying other discrete symmetric structures, such as discrete Wigner functions. As a by-product, our study provides an infinite family of highly symmetric frames and POVMs, which are of independent interest.

Note added: upon completion of this paper, we noticed a related work \cite{CarmST15}.

\section*{Acknowledgments}
Thanks to Mark Howard and Markus Grassl for
pointing out \rcite{CarmST15}.
 This work is supported in part by Perimeter Institute for Theoretical Physics. Research at Perimeter Institute is supported by the Government of Canada through Industry Canada and by the Province of Ontario through the Ministry of Research and Innovation.

\bibliographystyle{apsrev4-1}

\bibliography{all_references}

\end{document}